
\documentclass[conference,a4paper]{IEEEtran}
\usepackage{amsfonts}
\usepackage{amssymb}
\usepackage[cmex10]{amsmath}
\usepackage{amssymb}
\usepackage{graphicx}
\usepackage{graphics}
\usepackage{cite}
\usepackage{subcaption}
\usepackage{dsfont}
\usepackage{url}
\usepackage{flushend}
%
\hyphenation{op-tical net-works semi-conduc-tor Facebook}
%
%

\newcommand{\F}{\mathbb{F}}
\newcommand{\Z}{\mathbb{Z}}
\newcommand{\C}{\mathbb{C}}

\newcommand{\A}{\mathcal{A}}
\newcommand{\cF}{\mathcal{F}}

\newtheorem{definition}{Definition}

\newtheorem{theorem}{Theorem}
\newtheorem{lemma}{Lemma}

\newtheorem{proof}{Proof}
\newtheorem{remark}{Remark}

\newcommand{\prob}{\mathbf{Pr}}
\newcommand{\defas}{\overset{\text{def}}{=}}

\begin{document}
\title{Probabilistic Shaping and Non-Binary Codes}
%
\author{\IEEEauthorblockN{
Joseph J. Boutros\IEEEauthorrefmark{1}, 
Fanny Jardel\IEEEauthorrefmark{2}, 
and Cyril M\'easson\IEEEauthorrefmark{3}} 
\IEEEauthorblockA{\IEEEauthorrefmark{1}Texas A\&M University, 23874 Doha, Qatar}
\IEEEauthorblockA{\IEEEauthorrefmark{2}Nokia Bell Labs, 70435 Stuttgart, Germany}
\IEEEauthorblockA{\IEEEauthorrefmark{3}Nokia Bell Labs, 91620 Nozay, France\\
boutros@tamu.edu, \{fanny.jardel,cyril.measson\}@nokia.com}}
%
%
\maketitle
%
%
\begin{abstract}
We generalize probabilistic amplitude  shaping (PAS) with binary codes 
\cite{Bocherer2015} to the case of non-binary codes defined over prime finite fields.  
Firstly, we introduce probabilistic shaping via time sharing 
where shaping applies to information symbols only. Then, 
we design circular quadrature amplitude modulations (CQAM)
that allow to directly generalize PAS to prime finite fields
with full shaping.
\end{abstract}
%
%
\section{Introduction}
Shaping refers to methods that adapt the signal distribution to a communication channel for increased transmission efficiency. Shaping is eventually important for optimal information transmissions \cite{Gallager1968} and various solutions starting with non-linear mapping over asymmetric channel models
 towards pragmatic proposals involving shaped QAM signaling have been investigated and/or implemented over the years.

More precisely, building upon early works on, e.g., many-to-one mapping, research efforts from the 70s towards the 90s derive conceptual frameworks and methods to reduce the shaping gap in communication systems. Exploiting the principles of coded modulation, a sequence of works \cite{Calderbank1987,Forney1989a, Forney1989b, Calderbank1990, Fortier1992, Forney1992, Khandani1993, Laroia1994} present operational methods to reduce the shaping gap. Compared to cubic constellations, up to $\frac{\pi \text{e}}6\approx 1.53$dB of {\em shaping gain} is achievable using  well-adapted signaling. Simple methods such as trellis shaping or shell mapping permit to recover a significant fraction of the 1.53dB figure. Examples of applications include the ITU V.34 modem standard recommendations that uses shell mapping to recover 0.8dB.   %
 While several shaping schemes are based on the structural properties of lattices \cite{Forney2000, Urez2005, Pie2016},  more randomized schemes also emerge after the  re-discovery of probabilistic decoding in the 90s. 
With the advent of efficient binary codes, 
different coded modulation schemes were proposed offering 
 flexible and low-complex solutions \cite{Richardson2008}.  
In the 2000s, despite the important development of wireless communications,  the need for advanced shaping methods seems to have remained marginal. From a technological viewpoint, this may have been justified by the high variations of the channel in wireless communication networks. From an academic viewpoint, schemes have been analyzed and match the capacity-achieving distribution of a channel in different theoretical scenarios   \cite{Ling2014, Mondelli2014, Kramer2016}.  %
%
In the last few years, industrial applications of shaping methods have regained interest. This concerns areas where current technologies operate close to fundamental limits. For example, different methods have been experimented in optical communications \cite{Yankov2014, Buchali2015}. 
Hence, because there are already efficient VLSI implementations of contemporary coding schemes that 
 have been proven to asymptotically achieve capacity with constant complexity per information unit \cite{Richardson2008, Lentmaier2010, Arikan2009, Kudekar2013},  it is then natural to  combine them with efficient shaping methods. 

In probabilistic shaping, for linear digital modulations, 
the {\em a priori} probability distribution of modulation points is modified to match a discrete Gaussian-like
distribution, namely the Maxwell-Boltzmann distribution \cite{Kschischang1993}.
The method aims at maximizing the mutual information with respect to the same modulation where
all points are equally likely. 
 For special $2^m$-ASK and $2^m$-QAM constellations with linear binary codes,
this method is equivalent to probabilistic amplitude shaping (PAS) where uniformly-distributed
parity bits are assigned to the sign of a constellation point \cite{Bocherer2015,Schulte2016}.
In this paper, we generalize this method to the non-binary case. 
The goal is to permit the use of efficient non-binary codes
in order to enable low-latency processing 
 (reducing the need for `Turbo'-detection \cite{Richardson2008}).
Also, from an algebraic viewpoint, a characteristic $p>2$ of the finite field $\F_{p^m}$ on which coding is built
leads to new interesting problems such as distribution matching in $\F_{p^m}$ and assigning a constellation
points to elements in $\F_{p^m}$.

In this paper, codes are supposed to be linear and defined over $\F_p$, where $p$ is
an odd prime. Firstly, except for codes with a sparse generator matrix,
we show in Section~\ref{sec_Gallager_Fp} that parity symbols are asymptotically uniformly-distributed over $\F_p$.
This fact is used to 
derive two new methods for probabilistic shaping.
Time sharing is proposed in Section~\ref{sec_time_sharing} where symbols of a $p$-ary code
are mapped into $p$-ASK points. Hence, in time sharing, 
probabilistic shaping is performed only when information symbols are transmitted. 
Full probabilistic shaping is described in Section~\ref{sec_circular_qam}
where circular QAM (CQAM) constellations of size $p^2$ points are introduced. 
This second method assigns a constellation shell to a Maxwell-Boltzmann-distributed information symbol
and then a parity symbol selects a point within that shell.
Numerical results for $p$-ASK-based time sharing and $p^2$-CQAM probabilistic shaping
are shown in Section~\ref{sec_results}.
Similar to the binary case~\cite{Bocherer2015},  
a gap to channel capacity of $0.1$ dB or less is observed for CQAM constellations.
\section{Sum of random variables in a prime field \label{sec_Gallager_Fp}}
Lemma~4.1 in~\cite{Gallager1963} established the expressions of the probability of a sum in $\F_2$. 
We translate this result to a prime field $\F_p=\Z/p\Z$.
Principles of this extension to the non-binary case are implicit from Chapter~5 of~\cite{Gallager1963} with the use of the $z$-transform. Nevertheless, we give here the exact expression of the probability of a sum of prime random symbols 
modulo~$p$. This expression is directly related to Hartmann-Rudolph symbol-by-symbol probabilistic decoding \cite{Hartmann1976} in the special case of a single-parity check code and its generalization to 
characteristic $p$ \cite{Boutros2006}.

\begin{lemma}
Let $p$ be a prime and $\F_p=\{0, 1, \cdots, p-1 \}$ be the associated finite field. 
Consider a sequence $\{s_\ell\}_{\ell=1}^m$ of $m$ independent symbols over $\F_p$ in which the $\ell$-th 
symbol is $\beta \in \F_p$ with probability 
\[
\prob\{s_\ell=\beta\} = q_\ell(\beta). 
\]
Then, for any $k\in\F_p$, the probability that the sum of the $s_\ell$'s equals $k$ is 
\begin{align}
\prob\{\sum_{\ell=1}^m s_\ell =  k\} & = \frac{1+\sum_{i=1}^{p-1}
\prod_{\ell=1}^m \Big( \sum_{\beta=0}^{p-1} q_\ell(\beta) \omega^{i \beta-k+1} \Big)}p,
\end{align}
where $\omega\defas\exp(2\pi\sqrt{-1}/p)$ indicates the $p$-th root of unity.
\end{lemma}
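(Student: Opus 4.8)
The plan is to evaluate the law of the modular sum by discrete Fourier analysis over the additive group $\Z/p\Z$, i.e.\ by expanding the indicator of the event $\{\sum_\ell s_\ell = k\}$ in the additive characters $\beta \mapsto \omega^{i\beta}$. The engine of the argument is the orthogonality relation
\[
\frac1p \sum_{i=0}^{p-1} \omega^{i(a-k)} = \mathds{1}\{a \equiv k \bmod p\},
\]
valid because for $a\not\equiv k$ the summand runs over a full set of $p$-th roots of unity, which cancel. This identity lets me replace the rigid linear constraint $\sum_\ell s_\ell \equiv k$ by an averaged exponential that factorizes across the independent symbols. It is precisely the non-binary analogue of the $z$-transform manipulation alluded to in Chapter~5 of~\cite{Gallager1963}: the inner object $\sum_{\beta} q_\ell(\beta)\omega^{i\beta}$ is the characteristic function (DFT) of $q_\ell$, and the distribution of a sum of independent variables is the pointwise product of these transforms.

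First I would write the target probability as the explicit convolution
\[
\prob\{\textstyle\sum_{\ell} s_\ell = k\} = \sum_{\beta_1,\dots,\beta_m \in \F_p} \Big(\prod_{\ell=1}^m q_\ell(\beta_\ell)\Big)\, \mathds{1}\Big\{\textstyle\sum_\ell \beta_\ell \equiv k\Big\},
\]
where the factorization of the joint mass into $\prod_\ell q_\ell(\beta_\ell)$ uses the assumed independence of the $s_\ell$. Next I substitute the orthogonality identity for the indicator, interchange the finite sums over $i$ and over the $\beta_\ell$, and pull the $\omega^{-ik}$ term outside the product over $\ell$. The $m$-fold sum over $(\beta_1,\dots,\beta_m)$ then decouples, each coordinate contributing its own DFT, leaving
\[
\prob\{\textstyle\sum_{\ell} s_\ell = k\} = \frac1p \sum_{i=0}^{p-1} \omega^{-ik} \prod_{\ell=1}^m \Big(\sum_{\beta=0}^{p-1} q_\ell(\beta)\,\omega^{i\beta}\Big).
\]
Finally I isolate the $i=0$ term: there $\omega^{0}=1$ and each inner factor equals $\sum_\beta q_\ell(\beta)=1$ since $q_\ell$ is a probability mass function, so the $i=0$ contribution is exactly $1/p$; the surviving indices $i=1,\dots,p-1$ then reproduce the claimed expression.

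The content here is conceptual rather than computational: once the character expansion is in place, everything is bookkeeping, and the interchange of summations is immediate because all sums are finite. The two points requiring care are the normalization $\sum_\beta q_\ell(\beta)=1$, which is what collapses the $i=0$ term to the additive constant $1$ in the numerator, and the correct placement of the single character factor $\omega^{-ik}$ attached to the product over $\ell$. I expect the latter is the only genuinely delicate step in matching my symmetric form to the per-factor exponent $\omega^{i\beta-k+1}$ printed in the statement, so I anticipate the main effort will be reconciling this condensed exponent with the clean factor $\omega^{-ik}$ multiplying the product of transforms.
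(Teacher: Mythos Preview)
Your argument is correct and is essentially the same as the paper's: the paper packages the convolution as an enumerator polynomial $Q(t)=\prod_\ell \sum_\beta q_\ell(\beta)t^\beta \bmod (t^p-1)$, shifts by $t^{-k}$, and evaluates $\frac1p\sum_{i=0}^{p-1}Q_k(\omega^i)$ using $\sum_i \omega^{ki}=0$, which unwinds to exactly your identity $\frac1p\sum_{i=0}^{p-1}\omega^{-ik}\prod_\ell\sum_\beta q_\ell(\beta)\omega^{i\beta}$ before the $i=0$ term is split off. Your suspicion about the printed per-factor exponent $\omega^{i\beta-k+1}$ is well founded---neither your derivation nor the paper's own proof produces that form, and the natural target is your expression with a single global factor $\omega^{-ik}$ outside the product.
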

\begin{proof} 
Consider the enumerator function in $t$,
\begin{align}
Q(t)  \defas  \prod_{\ell=1}^m \Big( \sum_{\beta=0}^{p-1} q_\ell(\beta) t^{\beta} \Big) \text{~~mod~~} (t^p-1).
\end{align}
Observe that if this is expanded into a polynomial in $t$ (where degree operations are taken mod $p$), 
the coefficient of $t^k$ is the probability that the sum of $m$ symbols is $k$, which we write 
\begin{equation}
\Pr\{\sum_{\ell=1}^m s_\ell =  k\}=\text{coef}[Q(t),t^k]. 
\end{equation}

Let us also define for any $k\in\{0,1,2,\cdots,p-1\}$ the function 
\begin{equation}
Q_k(t) =  t^{-k} Q(t) \text{~~~mod~~} t^p-1. 
\end{equation}

In an identical manner as for $Q(t)$,  expanding $Q_k(t)$ would enumerate the probabilities of the sum of the $s_\ell$'s. 
Recall that the $p$-th root of unity in the complex plane satisfies $\sum_{i=0}^{p-1}\omega^{k i}=0$ for any $k\in\{1,2,\cdots,p-1\}$. 
Then, for any $k\in\F_p$, we have 
\begin{align}
\prob\{\sum_{\ell=1}^m s_\ell =  k\} & = \frac1p \sum_{i=0}^{p-1} Q_k(\omega^i)
\end{align}
because all but the constant polynomial terms are annihilated from the fact that the roots of  unity sum to zero. It remains to evaluate the expression observing that $\omega^0=1$ to get the results. $\square$
\end{proof}
\begin{remark} This lemma shows that, if a $s_\ell$ is uniformly distributed over $\F_p$, then the sum is also uniformly distributed.
\end{remark}
\begin{remark} For any $\ell$, it is straightforward from convexity arguments that the weighted sum 
$\sum_{\beta=0}^{p-1} q_\ell(\beta) \omega^{i \beta}$ in the complex plane lies inside the unit circle in the strict sense if and only if one of the probability distribution $q_\ell$ is not degenerated in one singular point. Therefore, assuming that the norm tends to be smaller and bounded away from 1, the distribution of the infinite sum tends to be uniform. 
\end{remark}
It remains to summarize this observation in a theorem.
\begin{theorem}
\label{th_uniform_parity}
Let $p$ be a prime and $\F_p=\{0,1,\cdots, p-1\}$ be the associated Galois field. 
Consider a sequence $\{s_\ell\}_{\ell\geq 1}$ of independent random symbols over $\F_p$ 
with respective probability distributions $\{(q_\ell(0),q_\ell(1),\cdots,q_\ell(p-1))\}_\ell\geq 1$ such that $\limsup_{\ell\to\infty}\{\max_{p}(q_\ell(p))\}<1$. 
Then
\begin{align}
\forall k\in\F_p,  & ~~~~\lim_{m\to\infty}\prob\{\sum_{\ell=1}^m s_\ell = k\} = \frac1p.
\end{align}
\end{theorem}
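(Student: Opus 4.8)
The plan is to start from the closed-form expression of the Lemma and show that every term other than the constant $1/p$ decays to zero as $m\to\infty$. Writing the Lemma's expression in the form obtained in its proof,
\[
\prob\{\textstyle\sum_{\ell=1}^m s_\ell = k\} = \frac1p\Big(1 + \sum_{i=1}^{p-1}\omega^{-ki}\prod_{\ell=1}^m \sigma_\ell^{(i)}\Big),
\]
where $\sigma_\ell^{(i)}\defas\sum_{\beta=0}^{p-1}q_\ell(\beta)\,\omega^{i\beta}$, the goal reduces to proving that $\prod_{\ell=1}^m\sigma_\ell^{(i)}\to 0$ for each fixed $i\in\{1,\dots,p-1\}$. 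Since $|\omega^{-ki}|=1$, this at once yields the claim for every $k$.

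The heart of the argument is a uniform contraction estimate for the factors $\sigma_\ell^{(i)}$. First I would record the elementary bound $|\sigma_\ell^{(i)}|\le\sum_\beta q_\ell(\beta)=1$, with equality only when $q_\ell$ is degenerate; this is exactly Remark~2, using that for $i\not\equiv 0$ the exponents $\omega^{i\beta}$ run over $p$ distinct roots of unity because $p$ is prime. To make this quantitative I would expand the squared modulus,
\[
|\sigma_\ell^{(i)}|^2 = 1 - 2\sum_{\beta<\gamma} q_\ell(\beta)q_\ell(\gamma)\,\big(1-\cos(2\pi i(\beta-\gamma)/p)\big),
\]
and lower-bound each cosine gap by $\mu\defas 1-\cos(2\pi/p)>0$, the smallest value attainable over all admissible indices. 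Combining $\sum_{\beta<\gamma}q_\ell(\beta)q_\ell(\gamma)=\tfrac12\big(1-\sum_\beta q_\ell(\beta)^2\big)$ with $\sum_\beta q_\ell(\beta)^2\le\max_\beta q_\ell(\beta)$ then gives $|\sigma_\ell^{(i)}|^2\le 1-\mu\big(1-\max_\beta q_\ell(\beta)\big)$, a bound that is independent of $i$.

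Now I would invoke the hypothesis. Because $\limsup_{\ell\to\infty}\max_\beta q_\ell(\beta)<1$, there exist $\delta>0$ and an index $\ell_0$ with $\max_\beta q_\ell(\beta)\le 1-\delta$ for all $\ell>\ell_0$; hence $|\sigma_\ell^{(i)}|\le\rho\defas\sqrt{1-\mu\delta}<1$ uniformly in $i$ and in $\ell>\ell_0$. Splitting the product and using $|\sigma_\ell^{(i)}|\le1$ for the first $\ell_0$ factors gives
\[
\Big|\prod_{\ell=1}^m\sigma_\ell^{(i)}\Big|\le\prod_{\ell=\ell_0+1}^m|\sigma_\ell^{(i)}|\le\rho^{\,m-\ell_0}\xrightarrow[m\to\infty]{}0 .
\]
Summing the $p-1$ vanishing correction terms then forces $\prob\{\sum_\ell s_\ell=k\}\to 1/p$ for every $k\in\F_p$, which is the assertion.

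The main obstacle is the uniform contraction step: one must rule out that some factor $|\sigma_\ell^{(i)}|$ creeps arbitrarily close to $1$ as $\ell$ grows, since otherwise the geometric decay would break down. The $\limsup$ hypothesis is precisely what prevents this, and the explicit second-moment estimate above converts it into a single common ratio $\rho<1$. A more abstract alternative would be a compactness argument: the map $q\mapsto\big|\sum_\beta q(\beta)\omega^{i\beta}\big|$ is continuous and strictly below $1$ on the compact set $\{q:\max_\beta q(\beta)\le1-\delta\}$, so its maximum over that set and over the finitely many $i$ equals some $\rho<1$. I would prefer the explicit computation, as it also displays the dependence on $p$ through $\mu=1-\cos(2\pi/p)$.
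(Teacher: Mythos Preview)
Your proof is correct and follows the same route the paper sketches: starting from the Lemma's formula, you show that each nontrivial term $\omega^{-ki}\prod_{\ell}\sigma_\ell^{(i)}$ vanishes because the factors are eventually bounded in modulus by some common $\rho<1$. The paper itself offers no proof beyond Remark~2's convexity observation, so your explicit second-moment expansion and the resulting bound $|\sigma_\ell^{(i)}|^2\le 1-\mu(1-\max_\beta q_\ell(\beta))$ supply the quantitative detail the paper omits rather than constituting a different strategy.
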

For error-correction over a prime field, this observation is interesting as follows. The limit theorem over $\F_p$ indicates that the non-systematic symbols obtained from a linear encoder associated with a dense generator matrix will tend to have a uniform distribution independently on the input distribution. 

\section{Probabilistic shaping via time sharing over prime fields\label{sec_time_sharing}}
A common mapping between non-binary codes and non-binary modulations is to select
a constellation and a finite field of identical size. Let $p$ be a prime integer,
$p>2$. Consider the set of $p$ points shown in Figure~\ref{fig_p-ASK}, known as $p$-ASK
modulation. This $p$-ASK set $\A=\{-\frac{p-1}{2}, \ldots, -1, 0, 1, \ldots, \frac{p-1}{2} \}$
is isomorphic to the finite field $\F_p$ (a ring isomorphism in $\Z$). Symbols from $\F_p$
are one-to-one mapped into $p$-ASK points. We embed $\F_p$ into $\Z$ such that a symbol $s \in \F_p$
and its corresponding point in $\A$ satisfy $x-s=0 \text{~mod~} p$. 

\begin{figure}[!h]
\centerline{\includegraphics[width=0.8\columnwidth]{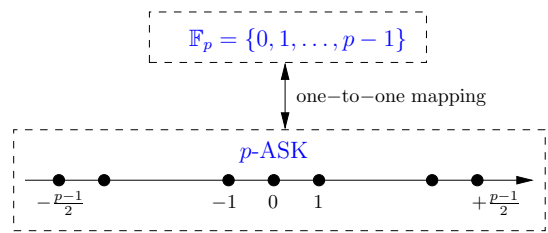}}
\caption{Real $p$-ASK constellation isomorphic to $\F_p$.\label{fig_p-ASK}}
\end{figure}

There are many advantages for such a simple structure where the source is $p$-ary,
the linear code is over $\F_p$, and $p$-ary modulation points are transmitted over the channel.
Firstly, a probabilistic detector needs no conversion between modulation points and code symbols.
A channel likelihood, after normalization, is directly fed as a soft value to the input of a 
probabilistic decoder. Secondly, turbo detection-decoding between the constellation $\A$
and the code $C$ is not required as for binary codes with non-binary modulations \cite{Richardson2008}.\\ 

Consider a systematic linear code $C$ over $\F_p$ where parity symbols satisfy Theorem~\ref{th_uniform_parity}, 
i.e., check nodes used for encoding have a relatively high degree. Many practical error-correcting codes
do satisfy this property, such as LDPC codes
over $\F_p$.
Let $R_c=k/n$ be the coding rate of $C$, where $n$ is the code length and $k$ is the code dimension. 
Assume that information symbols $s_1, s_2, \ldots, s_k$
at the encoder input are identically distributed according to an {\em a priori} 
probability distribution $\{ \pi_i \}_{i=0}^{p-1}$.
Let $P_{MB}(x,\nu) \propto \exp(-\nu |x|^2)$
be a discrete Maxwell-Boltzmann distribution \cite{Kschischang1993} with parameter $\nu \ge 0$.
The {\em a priori} distribution $\{\pi_i\}$ is taken to be
\begin{align}
\label{equ_pi_0} \pi_0 & = P_{MB}(0,\nu) \propto 1, \\
\label{equ_pi_i} \pi_i=\pi_{p-i} &=P_{MB}(i,\nu) \propto \exp(-\nu i^2),
\end{align}
for $i=1\ldots \frac{p-1}{2}$. 
The average energy per point for the $p$-ASK constellation, denoted by $E_s$, is given by
\begin{equation}
E_s=\sum_{x \in \A} P_{MB}(x,\nu) ~|x|^2 ~=~ 2 \sum_{i=1}^{(p-1)/2} \pi_i ~i^2.
\end{equation}

From Theorem~\ref{th_uniform_parity} and (\ref{equ_pi_0})\&(\ref{equ_pi_i}), 
a fraction $R_c$ of transmitted ASK points corresponding to information symbols
are Maxwell-Boltzmann-shaped and a fraction $1-R_c$ of ASK points corresponding to parity symbols
is uniformly distributed in the constellation. We refer to this coding scheme as probabilistic
shaping via time sharing. The target rate should be the average information rate
(expressed in bits per real dimension)
\begin{equation}
\label{equ_Rt_time_sharing}
R_t = R_c \log_2(p) = R_c I(X_s;Y) + (1-R_c) I(X_p;Y),
\end{equation}
where the two random variables $X_s, X_p \in \A$ satisfy $X_s \sim \pi_i$ and $X_p \sim 1/p$.
The random variable $Y$ represents the output of a real additive white Gaussian noise channel, 
where additive noise has variance $\sigma^2=\frac{N_0}{2}$.
For a given target rate $R_t$, the Maxwell-Boltzmann parameter $\nu$ is chosen
such that the signal-to-noise ratio $\gamma=E_s/N_0$ attaining $R_t$ is minimized.
Let $\gamma_{\A}$ be that minimum. We also define two signal-to-noise ratios $\gamma_{cap}$ and $\gamma_{unif}$
such that
\begin{equation}
R_t = \frac{1}{2} \log(1+2\gamma_{cap}),
\end{equation}
and
\begin{equation}
R_t = I(X_p;Y),~~\text{for}~\gamma=\gamma_{unif}.
\end{equation}
Then, the gap to capacity and the shaping gain (expressed in decibels) are respectively given by
$\gamma_{\A}(dB)-\gamma_{cap}(dB)$ and $\gamma_{unif}(dB)-\gamma_{\A}(dB)$.
In this time sharing scheme, probabilistic shaping is made only during a fraction $R_c$
of transmission time. This method is attractive
due to isomorphism between the field $\F_p$ and the $p$-ASK constellation.
From (\ref{equ_Rt_time_sharing}), one may quickly conclude that high coding rate is recommended to approach
full-time  probabilistic shaping. However, at $R_c$ close to $1$, the mutual information
$I(X;Y)$, for $X \in \A$, approaches its asymptote $\log_2(\A)=\log_2(p)$ and the required
signal-to-noise ratio $\gamma_{\A}$ goes far away from $\gamma_{cap}$. 
This is clearly shown in the numerical results presented in Section~\ref{sec_results}.
A method for full probabilistic shaping is proposed in the next section.
\section{Probabilistic shaping via $p^2$-circular QAM over prime fields\label{sec_circular_qam}}
We propose in this section a coded modulation scheme that allows full probabilistic 
shaping of all transmitted symbols with a non-binary linear code over $\F_p$. 
Probabilistic amplitude shaping with binary codes
maps uniformly-distributed parity bits into the sign of an ASK point \cite{Bocherer2015}. 
This sign mapping is not valid with a prime finite field $\F_p$, $p>2$.

The key idea in our new coded modulation is to assign the parity symbol to $p$
modulation points with the same amplitude. This is a direct generalization of the sign mapping
to $p$-ary mapping. The linear $p$-ary code is assumed to be systematic. 
Its information symbols become amplitude labels in the modulation. 
We propose a bi-dimensional constellation with $p^2$ points, 
referred to as {\em $p^2$-circular quadrature amplitude modulation} ($p^2$-CQAM). 
A circle containing CQAM points of the same amplitude will be called a {\em shell}.
The $p^2$-CQAM includes $p$ shells with $p$ points per shell.
Such a bi-dimensional constellation is not unique. Indeed, many ways
do exist to build $p$ shells and populate each shell with $p$ points.
As a consequence, we introduce a figure of merit for a constellation~\cite{Forney1989a} and 
we build a specific $p^2$-CQAM constellation that maximizes this figure of merit.

\begin{definition}
Consider a finite discrete QAM constellation $\A \subset \C$. Assume
that $\sum_{x \in \A} x = 0$.
Let $E_s=\frac{\sum_{x \in \A} |x|^2}{|\A|}$ be the average energy per point,
assuming equiprobable points. Let $d_{Emin}^2(\A)=\min_{x,x' \in \A, x\ne x'} |x-x'|^2$ be
the minimum squared Euclidean distance between the points of $\A$. 
A figure of merit $\cF_M$ for $\A$ is defined by the following expression:
\begin{equation}
\cF_M(\A)=\frac{d_{Emin}^2(\A)}{E_s}\cdot \log_2(|\A|).
\end{equation}
\end{definition}
The $\log_2(|\A|)$ factor is arbitrary, it is used in the above definition to normalize
the squared minimum Euclidean distance by bit energy instead of point energy. This may be
useful when comparing two constellations of different sizes.

Now, we build a $p^2$-CQAM constellation $\A$ that maximizes $\cF_M(\A)$ by populating
the $p$ shells as follows:
\begin{enumerate}
\item For the first shell, draw $p$ uniformly-spaced points on the unit circle. The points are 
$x_{\ell}=\exp(\ell \frac{2\pi}{p} \sqrt{-1})$, for $\ell=0 \ldots p-1$. Here, we impose
the constellation minimum distance to be the distance between two consecutive points of the first shell,
\begin{equation}
d_{Emin}(\A)=2\sin(\frac{\pi}{p}).
\end{equation}

\item Assume that shells $1$ to $i-1$ are already built. 
Let $x_{ip}=\rho_i \exp(\phi_i\sqrt{-1})$ be the first point of the $i$-th shell.
The $p-1$ remaining points on this shell 
are $x_{ip+\ell}=\rho_i \exp((\phi_i+\ell\frac{2\pi}{p})\sqrt{-1})$, $\ell=1, \ldots p-1$.
Let $d_i^2=\min_{\ell=0 \ldots ip-1} |x_{ip}-x_{\ell}|^2$ be the minimum distance
between the first point of the current shell and all previously constructed points.
The radius $\rho_i$ and the phase shift $\phi_i$ are determined by an incremental search:
\begin{itemize}
\item Start with $\rho_i=\rho_{i-1}$ and increment by a step $\Delta_{\rho}$. 
\item At each radius increment, vary $\phi_i$ from $\pi/p$ to $-\pi/p$.
\item Stop incrementing the radius $\rho_i$ and the phase shift $\phi_i$ when $d_i^2 \ge  d_{Emin}^2(\A)$. 
Now, $x_{ip}$ is found. 
\end{itemize}
\item Repeat the second construction step until completing the $p$-th shell of the $p^2$-CQAM constellation.\\
\end{enumerate}
The $p^2$-CQAM obtained with the construction described above has the circular symmetry 
required by PAS over $\F_p$.

\begin{figure*}[!t]
\begin{picture}(400,160)
\put(-12,10){
\put(0,150){\includegraphics[angle=270,width=0.9\columnwidth]{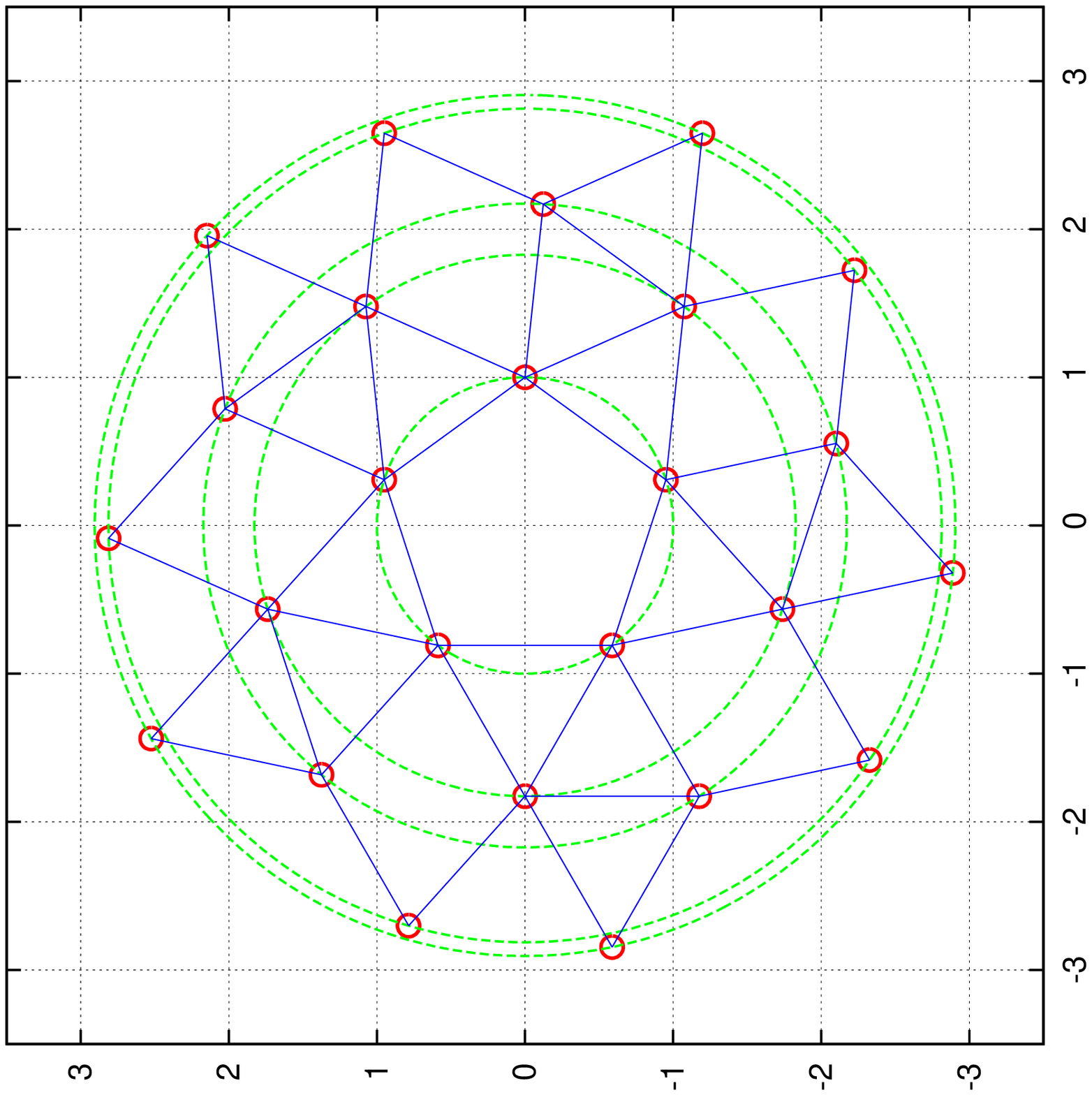}}
\put(160,150){\includegraphics[angle=270,width=0.9\columnwidth]{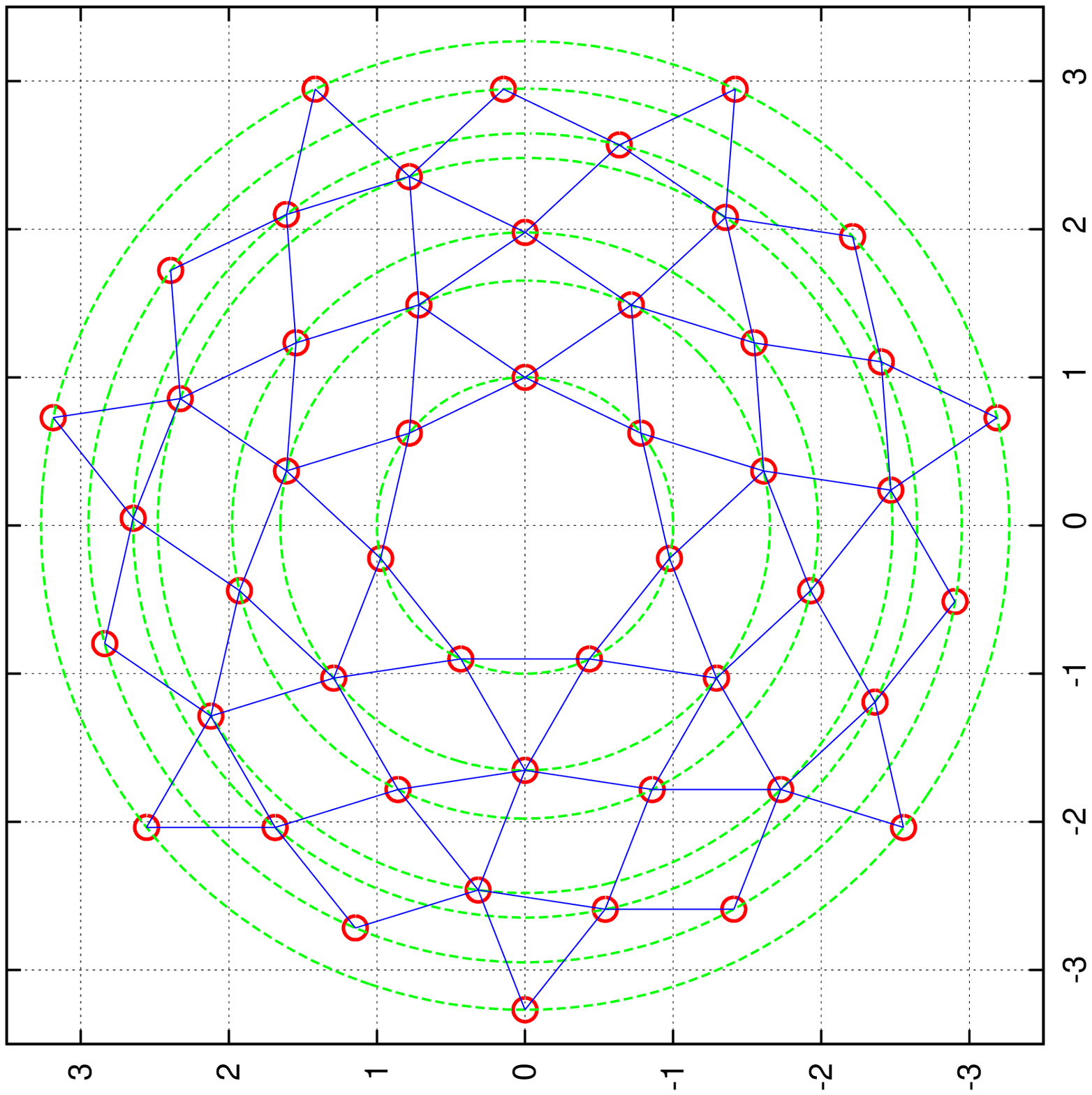}}
\put(320,150){\includegraphics[angle=270,width=0.9\columnwidth]{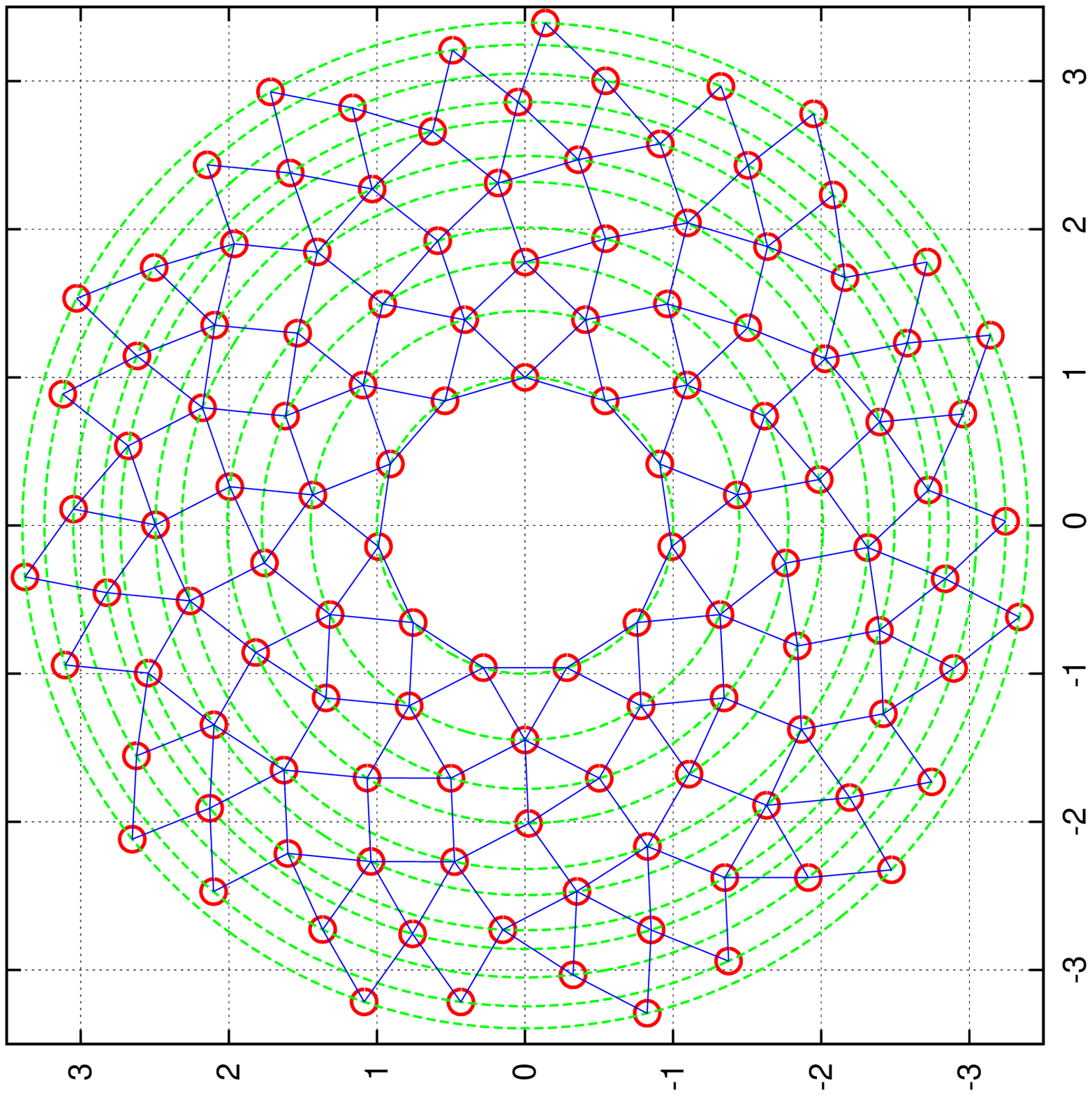}}}
\end{picture}
\caption{Bi-dimensional $p^2$-CQAM constellation for $p=5, 7, 11$ from left to right.\label{fig_p2_CQAM}}
\end{figure*}

Examples of circular QAM modulations for probabilistic amplitude shaping
are shown in Figure~\ref{fig_p2_CQAM}, for $p=5,7,11$ respectively.
Points are drawn as small circles in red. Blue segments connect points located
at minimum Euclidean distance. By the given construction,
the inner radius of the $p^2$-CQAM is $\rho_{in}=1, \forall p$. 
The outer radius $\rho_{out}$ varies slightly with $p$ but 
$\lim_{p \rightarrow \infty} \rho_{out} =\rho_{out}(\infty) \approx 3.6$.
This limit exists
because the sequence $\rho_{out}(p)$ is increasing with $p$
and bounded from above by $1+(p-1)d_{Emin}(\A) \le 1+2\pi$.
The limitation of the Maxwell-Boltzmann probability mass function to amplitudes
between $\rho_{in}=1$ and $\rho_{out}(\infty)$ is a major drawback. 
This short interval $[1,\rho_{out}(\infty)[$ is shifted away from the origin and is not large enough 
to yield a good Gaussian-like discrete distribution. In the next section, the shells radii
are modified to get a wider amplitude range, the $p^2$-CQAM phase shifts are kept invariant.\\

Let $s_1, s_2, \ldots, s_k$ be i.i.d. information symbols with {\em a priori} 
probability distribution $\{ \pi_i \}_{i=0}^{p-1}$, as in the previous section.
Then, for points $x_{ip+\ell} \in \A$, $i,\ell=0 \ldots p-1$, the prior distribution becomes
\begin{equation}
\label{equ_apriori_cqam}
\pi(x_{ip+\ell})=\frac{\pi_i}{p}=\frac{P_{MB}(|x_{ip}|, \nu)}{p}.
\end{equation}
In presence of the above distribution, the signal-to-noise ratio should be defined
with an average energy per point $E_s=\sum_{i=0}^{p-1} \pi_i |x_{ip}|^2$. 
Furthermore, 
the circular symmetry of a $p^2$-CQAM facilitates the numerical evaluation of average mutual information.
The general expression of $I(X;Y)$ with $p^2$ integral terms reduces to $p$ terms only.
The mutual information $I(X;Y)$ is given by
\begin{align*}
\sum_{i=0}^{p-1} \pi_i \int_{y\in \C} p(y|x_{ip}) \log_2\left( 
\frac{p(y|x_{ip})}{\sum_{\ell=0}^{p^2-1} \pi(x_{\ell}) p(y|x_{\ell})}
\right) dy.
\end{align*}
The Maxwell-Boltzmann 
parameter $\nu$ in (\ref{equ_apriori_cqam}) is chosen such that $2R_t=2 R_c \log_2(p)=I(X;Y)$ 
at a minimal value of signal-to-noise ratio $E_s/N_0=\gamma_{\A}$, where $R_t$
is the target rate per real dimension. The gap to capacity is determined by
the difference $\gamma_{\A}-\gamma_{cap}$ with $\gamma_{cap}$ satisfying 
$2R_t=\log_2(1+\gamma_{cap})$. Given the {\em a priori} distribution $\{ \pi_i \}_{i=0}^{p-1}$ 
of symbols in the finite field $\F_p$, the linear code $C[n,k]_p$ and the $p^2$-CQAM constellation
should be combined together as illustrated in Figure~\ref{fig_p2_CQAM_shaping}. 
Suppose that $R_c=1/2$ and say that $s_1 \in \F_p$ is an information symbol (encoder input) 
and $p_1 \in \F_p$ is a parity symbol.
Then, $s_1$ should be be shaped by the distribution matcher (DM)
according to $\{ \pi_i \}_{i=0}^{p-1}$ \cite{Schulte2016}.
The symbol $s_1$ will be used to select a shell in the $p^2$-CQAM and the parity symbol $p_1$
will uniformly select a point on that shell. Similarly, suppose that $R_c=2/3$
and consider four information symbols $s_1, s_2, s_3, s_4$ and two parity symbols $p_1, p_2$.
The DM shall generate $s_1, s_2, s_3$ according to the distribution $\{ \pi_i \}_{i=0}^{p-1}$
and select the shell of three $p^2$-CQAM points.
The symbol $s_4$ is read directly from a uniform i.i.d $p$-ary source. 
Given the shells of three points, uniformly-distributed symbols $s_4, p_1, p_2$ 
constitute the points indices inside those shells. In the general case, 
$n/2$ symbols in $\F_p$ with probability distribution $\{ \pi_i \}$
are read from the DM and mapped into a shell number
for $n/2$ CQAM points. The probabilistic shaping is due to these $n/2$ symbols
On the other hand, $k-n/2$ uniformly-distributed symbols are directly read from the source.
Together with $n-k$ parity symbols, i.e., a total of $n/2$ symbols, uniformly-distributed
symbols in $\F_p$ determine the phase of CQAM points within constellation shells.

\begin{figure}[!h]
\centerline{\includegraphics[width=0.99\columnwidth]{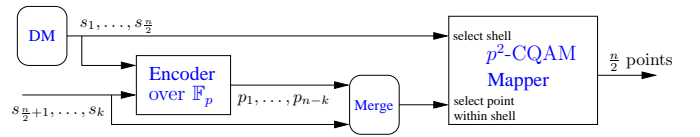}}
\caption{Full probabilistic amplitude shaping with $p^2$-CQAM.\label{fig_p2_CQAM_shaping}}
\end{figure}

Our $p$-ary coded modulation suited for probabilistic shaping assumes that $R_c \ge 1/2$,
i.e., $k \ge n/2$. Coding rates in the interval $[0,1/2[$ are less attractive for probabilistic
amplitude shaping because, for small $R_c$, 
a constellation with equiprobable points already shows
a rate that is too close to channel capacity in terms of signal-to-noise ratio.


\section{Numerical results \label{sec_results}}
Two typical values of $p$ are considered in this section, namely $p=7$ and $p=13$.
The target rate herein is expressed in bits per real dimension for both real
and complex constellations. Gaps and gains are expressed in decibels.\\ 


\begin{table}[h!]
\begin{center}
\begin{tabular}{|c|cccc|c|}
\hline
Size & Coding      & Target & Potential & Gap     & Effective \\
 $p$ & Rate        &  Rate  & Gain      & to cap. & Gain \\ \hline 
7  & $2/3$    & 1.871 & 0.817 & 0.331 & 0.485\\ 
7  & $3/4$    & 2.105 & 0.982 & 0.428 & 0.553\\ 
7  & $4/5$    & 2.245 & 1.105 & 0.546 & {\bf 0.559}\\ 
7  & $17/20$  & 2.386 & 1.283 & 0.753 & 0.530\\ 
7  & $9/10$   & 2.526 & 1.588 & 1.133 & 0.455\\
7  & $19/20$  & 2.666 & 2.232 & 1.916 & 0.316\\ \hline
13 & $2/3$    & 2.466 & 0.997 & 0.346 & 0.651\\ 
13 & $3/4$    & 2.775 & 1.129 & 0.376 & 0.753\\ 
13 & $4/5$    & 2.960 & 1.214 & 0.443 & {\bf 0.771}\\ 
13 & $17/20$  & 3.145 & 1.328 & 0.593 & 0.735\\ 
13 & $9/10$   & 3.330 & 1.549 & 0.915 & 0.633\\ 
13 & $19/20$  & 3.515 & 2.096 & 1.658 & 0.438\\ \hline
\end{tabular}
\end{center}
\caption{\label{tab_time_sharing} Signal-to-noise ratio gain (dB) of probabilistic shaping via time sharing for $7$-ASK and $13$-ASK.}
\end{table}

For probabilistic amplitude shaping via time sharing, signal-to-noise ratio 
gaps and gains are presented in Table~\ref{tab_time_sharing} 
for different values of the coding rate $R_c$. As discussed
in Section~\ref{sec_time_sharing}, the effective gain due to shaping
decreases at very high rate. A coding rate around $4/5$ yields the highest
effective gain. One of our perspectives is to analytically determine the optimal
coding rate (or its range) from (\ref{equ_Rt_time_sharing}). 
Tables~\ref{tab_cqam_gains} includes results for CQAM shaping. 
As suggested in the previous section, CQAM radii are stretched
to improve the range for $P_{MB}(\nu,x)$. Here, radius $\rho_i$ of shell~$i$
is taken to be $1+(\rho_{max}-1) ((i-1)/(p-1))^{\beta}$ where $\rho_{max} > \rho_{out}(\infty)$.
At $R_c=2/3$, optimized parameters are $\rho_{max}=4.8$ and $\beta=0.76$ for $7^2$-CQAM
and $\rho_{max}=6.0$ and $\beta=0.80$ for $13^2$-CQAM.
Square $(p\text{-ASK})^2$ constellations are not valid for full PAS because
they require time sharing, however we added them for comparison purpose.
Shaping with $7^2$-CQAM and $13^2$-CQAM is about $0.1$ dB from the additive white 
Gaussian noise channel capacity.\\



\begin{table}[h!]
\begin{center}
\begin{tabular}{|c|c|c|c|c|}
\hline
Constellation             & Target Rate & Pot. Gain   & Gap   & Full shaping\\ \hline 
$(7\text{-ASK})^2$    & 1.871       & 0.817       & 0.098 & No  \\ 
$7^2$-CQAM            & 1.871       & 0.744       & {\bf 0.101} & Yes \\ \hline
$(13\text{-ASK})^2$   & 2.466       & 0.998       & 0.036 & No  \\ 
$13^2$-CQAM           & 2.466       & 1.092       & {\bf 0.088} & Yes \\ \hline
\end{tabular}
\end{center}
\caption{\label{tab_cqam_gains} Gain (dB) of full probabilistic shaping for circular constellations $7^2$-CQAM and $13^2$-CQAM
at $R_c=2/3$.}
\end{table}

\newcommand{\ieeeit}{{\em IEEE Trans.~Inf.~Theory}}
\newcommand{\ieeecom}{{\em IEEE Trans.~Commun.}}
\newcommand{\ieeesac}{{\em IEEE J.~Sel.~Areas~Commun.}}

\end{document}